\newtheorem{theorem}{Theorem}
\newtheorem{hypothesis}{Hypothesis}
\newcommand{\R}{\mathbb{R}}
\newcommand{\e}{\mathrm{e}}
\newcommand{\Hilb}{\mathcal{H}}
\DeclareMathOperator{\Tr}{Tr}
\newcommand{\1}{\mathbbm{1}}
\newcommand{\Ket}[1]{ \left| #1 \right\rangle}
\newcommand{\Bra}[1]{ \left\langle #1 \right|}
\newcommand{\Scal}[2]{ \left\langle #1 | #2 \right\rangle}
\newcommand{\MV}[1]{\left\langle #1 \right\rangle}
\newcommand{\braket}[2]{\langle #1 | #2 \rangle}
\newcommand{\ketbra}[2]{| #2 \rangle\langle #1 |}
\newcommand{\ket}[1]{| #1 \rangle}
\newcommand{\bra}[1]{\langle #1 |}
\renewcommand{\equiv}{\coloneqq}
\begin{document}
\date{\today}
\title{Degenerate observables and the many Eigenstate Thermalization Hypotheses}%
\author{Fabio Anza}
\affiliation{Clarendon  Laboratory,  University  of  Oxford,  Parks  Road,  Oxford  OX1  3PU,  United  Kingdom}
\affiliation{The Abdus Salam International Centre for Theoretical Physics (ICTP), Trieste, Italy}
\author{Christian Gogolin}
\affiliation{ICFO-Institut de Ciencies Fotoniques, The Barcelona Institute of Science and Technology, 08860 Castelldefels (Barcelona), Spain}
\author{Marcus Huber}
\affiliation{Institute for Quantum Optics and Quantum Information (IQOQI), Austrian Academy of Sciences, Boltzmanngasse 3, A-1090 Vienna, Austria}

\begin{abstract}
Under unitary time evolution, expectation values of physically reasonable observables often evolve towards the predictions of equilibrium statistical mechanics.
The eigenstate thermalization hypothesis (ETH) states that this is also true already for individual energy eigenstates. 
Here we aim at elucidating the emergence of ETH for observables that can realistically be measured due to their high degeneracy, such as local, extensive or macroscopic observables.
We bisect this problem into two parts, a condition on the relative overlaps and one on the relative phases between the eigenbases of the observable and Hamiltonian.
We show that the relative overlaps are completely unbiased for highly degenerate observables and demonstrate that unless relative phases conspire to cumulative effects this makes such observables verify ETH.
Through connecting the degeneracy of observables and entanglement of the energy eigenstates this result elucidates potential pathways to equilibration in a fully general way.
\end{abstract}

\maketitle 

``Pure state quantum statistical mechanics''\cite{Pure1,Cazalilla2010a,Pure1b,Pure2,Pure3} aims at understanding under which conditions the use of tools from statistical mechanics can be justified based on the first principles of standard quantum mechanics with as few extra assumptions as possible.
To explain the emergence of thermalization it combines three approaches:
Typicality arguments \cite{Neumann1929,Tumulka2010,Typ1,Typ2,Typ3,Typ4,Typ5,Typ6}, the dynamical equilibration approach \cite{Rei1,Rei2,tsc1,tsc2,tsc3,tsc4,tsc5,tsc6} and the Eigenstate Thermalization Hypothesis (ETH) \cite{Berry,Shnir,ETH0,ETH1,ETH2,ETH3,ETH4,ETH5,ETH6,ETH7,ETH8,ETH9,ETH10}. According to the first one, systems appear to be in equilibrium because, in a precise sense, most states are in equilibrium.
Alternatively, according to the second approach apparent equilibration of observables and whole subsystems emerges because initial states of large many-body systems overlap with many energy eigenstates and therefore explore a large part of Hilbert space during their evolution, almost all the while being almost indistinguishable from a static equilibrium state.
ETH, on the other hand, is a hypothesis about properties of individual eigenstates of sufficiently complicated quantum many-body systems which was suggested by various results in quantum chaos theory and it adduces the appearance of thermalization during such equilibration to an underlying chaotic behavior.
The basic idea is that, for large system sizes and in sufficiently complicated quantum many-body systems, the energy eigenstates can be so entangled that when we look at their overlaps with the basis of a physical observable they can be effectively described by random variables.
If the ETH is fulfilled, it guarantees thermalization whenever equilibration happens because of the mechanisms described above.
Depending on how broad one wants the class of initial states that thermalize to be, the fulfillment of the ETH is also a necessary criterion for thermalization \cite{Pure1b,DePalma2015}.

The ETH is sometimes criticized for its lack of predictive power, as it leaves open at least three important questions: what precisely are ``physical observables''; what makes a system ``sufficiently complicated'' to expect that ETH applies; how long will it take for such observables to reach thermal expectation values \cite{tsc1,tsc2}.
For this reason, a lot of effort has been focused on numerical investigations that validate the ETH in specific Hamiltonian models and for various observables, often including local ones.
The ETH is generally found to hold in non-integrable systems that are not many-body localized and equilibration towards thermal expectation values usually happens on reasonable times scales \cite{tsc1,tsc3,tsc2,ETH0}.

Recently \cite{MaxShan} it has been shown that for any Hamiltonian there is always a large number of observables which satisfy ETH.
They have been dubbed ``Hamiltonian Unbiased Observables'' (HUO) and admit an algorithmic construction.
Unfortunately this still leaves open when concrete physically relevant observables satisfy the ETH.
In this letter we make progress in this direction.
Building on the connection between HUOs and ETH, we present a theorem which can be used as a tool to investigate the emergence of ETH.
In order to show how it can be used, we present three applications:
local observables, extensive observables, and macro-observables.
We will give precise definitions for each of them later.

The paper is organized as follows.
First we set-up the notation, recall different formulations of the ETH and clarify which one we will be using throughout the paper.
We continue with a brief digression on physical observables and degeneracies and recall the concepts of Hamiltonian unbiased basis and observables.
We then present our main result, which elucidates the question under which conditions highly degenerate observables are HUO and discuss consequences of it for local observables, extensive observables, and a certain type of macro-observables.

\paragraph{Versions of the ETH.}
We start by reviewing several versions of the ETH that have appeared in the literature.
All versions of the ETH are statements about properties of large systems.
In principle one would hence state the following in terms of families of systems of increasing size/particle number.
To not over-complicate things we do not make this explicit and instead implicitly assume that a limit of large system size exists and makes sense and that it is understood that the following are meant as statements about asymptotic scaling.   
Throughout the paper we assume all Hamiltonians $H$ to be non-degenerate with eigenvalues $E_m$ and eigenstates $\ket{E_m}$.
For any given initial state of the form $\ket{\psi_0} = \sum_{m}c_m \ket{E_m}$ with $c_m \coloneqq \braket{E_m}{\psi_0}$ we denote by $\rho_{\mathrm{DE}} \coloneqq \sum_{m}|c_m|^2 \ket{E_m}\bra{E_m}$ the diagonal ensemble, also known as the time averaged state.

Before we continue, we review some variants of the eigenstate thermalization hypothesis. These are essentially different mathematical statements which aim at formalizing the same physical intuition. Our goal here is to provide a reasonable clusterization of the most used versions of ETH and to state which one we will refer to throughout the paper.
\begin{hypothesis}[Original ETH] \label{hyp:originaleth}
  The matrix elements $A_{m,n} \coloneqq \bra{E_m}A\ket{E_n}$ of any physically reasonable observable $A$ with respect to the energy eigenstates $\ket{E_m}$ in the bulk of the spectrum of a Hamiltonian of a system with $N$ particles satisfy
  \begin{subequations}
  \begin{align}
    &&- \ln | A_{m+1,m+1} - A_{m,m} | &\in \mathcal{O}(N) \\
    &\text{and} & - \ln |A_{m,n}| &\in \mathcal{O}(N) \,.
  \end{align}
  \end{subequations}
\end{hypothesis}
In words: Off-diagonal elements of physically reasonable observables and the differences between neighboring diagonal elements are exponentially small in the size of the system.
This kind of ETH is what Srednicki argued to be fulfilled in a hard-sphere gas \cite{ETH3}. Similar variants appeared for example in \cite{ETH5,ETH6,ETH7,Khemani2014,ETH0}.

\begin{hypothesis}[Thermal ETH] \label{hyp:thermaleth}
  There exists a function $\beta:\R\to\R_0^+$ such that for any physically reasonable observable $A$ the expectation values $A_{m} \coloneqq \bra{E_m}A\ket{E_m}$ of $A$ with respect to the energy eigenstates $\ket{E_m}$ in the bulk of the spectrum of a Hamiltonian of system are close to thermal in the sense that 
  \begin{equation}
    |A_{m} - \Tr(A\,\e^{-\beta(E_m)\,H})/\Tr(\e^{-\beta(E_m)\,H})| \in \mathcal{O}(1/N) \,.
  \end{equation}
\end{hypothesis}
Such formulations of ETH appeared for example in \cite{ETH1,ETH24,DePalma2015,Kim2014}, along with a rigorous proof of a statement that is closely related but weaker than Hypothesis~\ref{hyp:originaleth} for translation invariant Hamiltonians with finite range interactions.
Whether a $1/N$ scaling should be required or whether one would be content with a weaker decay is debatable.

\begin{hypothesis}[Smoothness ETH] \label{hyp:smoothnesseth}
  For any physically reasonable observable $A$ there exists a function $a:\R \to \R$ that is Lipschitz continuous with a Lipschitz constant $L \in \mathcal{O}(1/N)$ such that the expectation values $A_{m} \coloneqq \bra{E_m}A\ket{E_m}$ of $A$ with respect to the energy eigenstates $\ket{E_m}$ in the bulk of the spectrum of a Hamiltonian of system with $N$ particles satisfy
  \begin{equation}
     - \ln |A_{m} - a(E_m)| \in \mathcal{O}(N) \,.
  \end{equation}
\end{hypothesis}
In words: The expectation values of physically reasonable observables in energy eigenstates approximately vary slowly as a function of energy instead of widely jumping over a broad range of values even in small energy intervals.
The function $a(E)$ is often related to the average of $A$ over a small microcanonical energy window around $E$.
Similar statements of the ETH have been used for example in \cite{smoothnessETH1,Bartsch2017,Beugeling2014a,Beugeling2015,Khodja2014,Pure2,Steinigeweg2014,Torres-Herrera2014}. 

Several other versions of the ETH and variations of the statements above can be found in the literature and there is a further level of diversification which needs to be mentioned:
All the statements above are intended to hold for all energy eigenstates in the bulk of the spectrum.
It is also possible to require them to hold only for all but a small fraction of these eigenstates, which somehow goes to zero in the thermodynamical limit.
Such statements have been dubbed \emph{Weak ETH} \cite{WeakETH}.
Another related concept is the \emph{eigenstate randomization Hypothesis} \cite{ERH1}, which states that the diagonal elements of physical observables should behave as random variables.
Together with an assumption on the smoothness of the energy distribution, this allows to derive a bound on the difference between the infinite-time and a suitable microcanonical average.

The main difference among the formulations of the ETH listed above is that the first one is also a statement about the off-diagonal matrix elements $A_{mn}$ while the other two pertain only to diagonal matrix elements $A_{mm}$.
We believe it is important to highlight this aspect because the off-diagonal matrix elements contribute in a non-trivial way to the out-of-equilibrium dynamics of the observable \cite{tsc1,tsc2,tsc3,tsc4,tsc5,tsc6,mondainirigol}.
This is the reason why we (as others do \cite{ETH5}) consider the Original ETH as more fundamental.
Hereafter, when we refer to ETH we will always refer to the technical statement of Original ETH or ETH \ref{hyp:originaleth}.

\paragraph{Physical observables.}
Another issue left open by the above definitions of the ETH is the identification of physical observables for which ETH is supposed to hold.
In this work we show that highly degenerate observables are good candidates.
Those are natural in at least three scenarios:
First, local observables only have a small number of distinct eigenvalues, as they act non-trivially only on a low dimensional space, and each such level is exponentially degenerate in the size of the system on which they do not act.
Second, averages of local observables, like for example the total magnetization, are, for combinatorial reasons, highly degenerate around the center of their spectrum.
Third, \emph{macro observables} as introduced by von~Neumann \cite{Neumann1929,Tumulka2010} and studied in \cite{Typ3,Typ4,Typ5} that are degenerate through the notion of macroscopicity.
Here the idea is that on macroscopically large systems one can only ever measure a rather small number of observables and these observables can take only a number of values that is much smaller than the enormous dimension of the Hilbert space and they commute either exactly or are very close to commuting observables. An example are the classical position and momentum of a macroscopic system.
While they are of course ultimately a coarse grained version of the sum of the microscopic positions and momenta of the all the constituents they can both be measured without disturbing the other in any noticeable way.
Such classical observables hence partition, in a natural way, the Hilbert space of a quantum system in a direct sum of subspaces, each corresponding to a vector of assignments of outcomes for all the macro observables.
Even by measuring all the available macro observables one can only identify which subspace a quantum system is in, but never learn its precise quantum state.
To get the impression that a system equilibrates or thermalizes it is hence sufficient that the overlap of the true quantum state with each of the subspaces from the partition is roughly constant in time and the average agrees with the suitable thermodynamical prediction. 
One would thus expect ETH to hold for such observables.
As in any realistic situation, the number of observables times the maximum number of outcomes per observable (and hence the number of different subspaces) is vastly smaller than the dimension of the Hilbert space, one is again dealing with highly degenerate observables.

\paragraph{Hamiltonian Unbiased Observables.}
Before we proceed with the main result of the paper, it is important to summarize the results derived in \cite{MaxShan}. Suppose $A\coloneqq\sum_i a_i A_i$ is an observable with eigenvalues $a_i$ and respective projectors $A_i$. We say that $A$ is a \textit{thermal observable} with respect to the state $\rho$ if its measurement statistics $p(a_i) \coloneqq \Tr\left( \rho\, A_i\right)$ maximizes the Shannon entropy $S_A \coloneqq - \sum_i p(a_i) \log p(a_i)$ under two constraints: normalization of the state $\Tr(\rho) = 1$ and fixed average energy $\Tr\left(\rho\,H \right)$.

In \cite{MaxShan} it was proven that this is a generalization of the standard notion of thermal equilibrium, in the following sense: What we usually mean by thermal equilibrium is that the state of the system $\rho$ is  close to the Gibbs state $\rho_G$, in the sense given by some distance defined on the convex set of density matrices. A well-known way to characterize $\rho_G$ is via the constrained maximization of von Neumann entropy $S_{\mathrm{vN}}\coloneqq-\Tr( \rho\, \log \rho)$.
Now, for any state $\rho$, the minimum Shannon entropy $S_A$ (among all the observables $A$) is the von Neumann entropy
\begin{align}
\min_{A} S_A = S_{\mathrm{vN}}\,.
\end{align}
Therefore, the Gibbs ensembles is the state that maximizes the lowest among all the Shannon entropies $S_A$.
Hence the maximization of the Shannon entropy $S_A$ is an observable dependent generalization of the ordinary notion of thermal equilibrium. 

One can use the Lagrange multiplier technique to solve the constrained optimization problem and two equilibrium equations emerge. They implicitly define the equilibrium distribution $p_{\mathrm{eq}}(a_i)$ as their solution. Using such equations to investigate the emergence of thermal observables in a closed quantum system, it can be proven that for any given Hamiltonian there is a huge amount of observables that satisfy ETH: the Hamiltonian Unbiased Observables (HUO).

The name originates from the following notion:
A set of normalized vectors $\{\Ket{u_j}\}_j$ is mutually unbiased with respect to another set of vectors $\{\ket{v_k}\}_k$ if the inner product between any pair satisfies |$\braket{u_i}{v_k}| = 1/\sqrt{D}$, where $D$ is the dimension of the Hilbert space. A basis is called Hamiltonian Unbiased Basis (HUB) if it is unbiased with respect to the Hamiltonian basis.
Accordingly, a HUO is an observable which is diagonal in a HUB.
The concept of mutually unbiased basis (MUBs) has been studied in depth in quantum information theory \cite{MUB1,MUB2,MUB3,MUB4,MUB5,MUB6}.
For our purposes, the most important result is the following:
Given a Hilbert space $\Hilb = \otimes_{j=1}^N \Hilb_j$ with $\dim(\Hilb_j) = p$ for some prime number $p$ and some fixed orthonormal basis in $\Hilb$ there is a total of $p^N+1$ orthonormal bases, including the fixed basis, that are all pairwise mutually unbiased \cite{MUB2,MUB3}.
Moreover, there is an algorithm to explicitly construct all of them \cite{MUB2,MUB3}.
Applying this result to the Hamiltonian basis we conclude that there are $p^N$ HUBs.

By studying the matrix elements of a HUO, in the Hamiltonian basis, it is not too difficult to see that sufficiently degenerate HUOs should satisfy ETH (under some mild additional conditions that we discuss in the following.
Suppose a HUO $O^{\mathrm{HUO}}$ has spectral decomposition
\begin{align}
  O^{\mathrm{HUO}} &\coloneqq \sum_{j=1}^{n_A} \lambda_j \Pi_j & \Pi_j &\coloneqq \sum_{s=1}^{d_j} \Ket{j,s}\Bra{j,s} \,.
\end{align}
where $\left\{\Ket{j,s} \right\}$ is the HUB whose elements have been labeled with two indices: $j$ runs over the distinct eigenvalues $\lambda_j$ while $s$ runs over the possible $d_j$ degeneracies of each eigenvalue.
It is easy to see that
\begin{align}
O^{\mathrm{HUO}}_{mm} \coloneqq \bra{E_m}O^{\mathrm{HUO}}\ket{E_m} =   \frac{\Tr(O^{\mathrm{HUO}})}{D}\,.
\end{align}
Therefore, the diagonal matrix elements are constant and the average value at equilibrium, i.e., computed from the diagonal ensemble, is microcanonical 
\begin{align} \label{eq:mc}
\Tr \left(O^{\mathrm{HUO}} \rho_{\mathrm{DE}} \right)= \MV{O^{\mathrm{HUO}}  }_{\mathrm{mc}} \,.
\end{align}
where $\MV{\cdot}_{\mathrm{mc}}$ is the expectation value computed on the microcanonical state $\frac{1}{D}\mathbb{I}$.
Because of the MUB condition we have $\braket{E_m}{j,s} = e^{i\theta^m_{js}} / \sqrt{D}$, which means that the off-diagonal matrix elements are given by
\begin{align}
&\mathcal{O}^{\mathrm{HUO}}_{mn} =  \frac{1}{D}\sum_{j=1}^{n_A} \lambda_j  \sum_{s=1}^{d_j} e^{i\gamma_{js}^{mn}} && \gamma_{js}^{mn} \coloneqq (\theta_{js}^m - \theta_{js}^n)\,.
\end{align}
In \cite{MaxShan} a numerical study on the phases $\gamma_{js}^{mn}$ was performed.
It was argued that the $\gamma_{js}^{mn}$, when constructed with the standard algorithm to build MUBs, have certain features of pseudo-random variables with uniform distribution in $\left[ -\pi , \pi \right]$.
Whenever each eigenvalue has a large degeneracy, i.e., $d_j \gg n_A \geq 2$, we can apply the central limit theorem to argue that
\begin{align}
\mathcal{O}^{\mathrm{HUO}}_{mn} &\approx \sum_{j=1}^{n_A}  X_{mn}^{(j)} & X_{mn}^{(j)} &\sim \mathcal{N} \left[0,\left(\frac{\lambda_j \sqrt{d_j}}{D}\right)^2\right] \label{eqHUO}\,,
\end{align}
where $X^{(j)}_{mn} \sim \mathcal{N}[\mu,\sigma^2]$ means that $X^{(j)}_{mn}$ is a complex random variable, normally distributed, with mean $\mu$ and variance $\sigma^2$.
Under the additional assumption that the $X^{(j)}_{mn}$ are independent, one finds that, because Eq.~\eqref{eqHUO} is a finite sum of normally distributed random variables, we have $\mathcal{O}_{mn}^{\mathrm{HUO}} \sim \mathcal{N} \left[ 0, \sigma^2_{n_A}\right]$ with variance
\begin{equation}
  \sigma_{n_A}^{2} \equiv \sum_{j=1}^{n_A} \left(\frac{\lambda_j \sqrt{d_j}}{D}\right)^2 = \frac{1}{D} \MV{\left(\mathcal{O}^{\mathrm{HUO}}\right)^2}_{\mathrm{mc}}\,,
\end{equation}
Eventually we get:
\begin{align}
\mathcal{O}^{\mathrm{HUO}}_{mn} \approx \sqrt{\frac{1}{D} \MV{\left(\mathcal{O}^{\mathrm{HUO}}\right)^2}_{\mathrm{mc}} }  \,\,\,\mathcal{R}_{mn} \,. \label{eq:HUOandETH}
\end{align}
For a binary observable, i.e, such with eigenvalues $\pm 1$, this means that for large $d_j$ 
\begin{align}
\mathcal{O}^{\mathrm{HUO}}_{mn} &\approx \frac{1}{\sqrt{D}} \mathcal{R}_{mn} & \mathcal{R}_{mn} &\sim \mathcal{N}[0,1] \,,
\end{align}
which means that $\mathcal{O}^{\mathrm{HUO}}$ satisfies Hypothesis~\ref{hyp:originaleth}.

Before we proceed, we would like to expand on the mechanism behind the emergence of ETH for a highly-degenerate HUO.
Eq.~\eqref{eqHUO} will hold whenever we can apply the central limit theorem within each subspace at fixed eigenvalue.
As was argued in \cite{MaxShan}, for a fixed pair of indices $(m,n)$, the phases $\gamma_{js}^{mn}$ behave as if they were pseudo-random variables and their number is exponentially large in the system size.
The labels $(j,s)$ provide a partition of these $D$ phases into $n_A$ groups, each made of $d_j$ elements.
In the overwhelming majority of cases each group of $d_j$ phases will exhibit the same statistical behavior as the whole set.
In this case, Eq.~\eqref{eqHUO} will behave as a sum of independent random variables and it will give the exponential decay of the off-diagonal matrix elements.
It may happen that the index $j$, labeling different eigenvalues, samples the phases in a biased way and prevents some of the off-diagonal matrix elements from being exponentially small.
This, even though it seems unlikely, is possible and it would induce a coherent dynamics on the observable which can prevent its thermalization.
This can happen for example in integrable quantum system for observables which are close to being conserved quantity.

The point can also be seen from the perspective of random matrix theory. Given the Hamiltonian eigenbasis, if we perform several random unitary transformations and study the distribution of the outcome basis, it can be shown that, in the overwhelming majority of cases we will end up with a basis that is almost HUB\cite{MUB5,MUB6}, up to corrections which are exponentially small in the system size.
Hence for large system sizes, if we pick a basis at random, most likely it will be almost a HUB \cite{MUB5,MUB6}.

We now present the main result of the paper: a theorem that can be used to study under which conditions highly degenerate observables are HUO.
\begin{theorem} \label{thm:mainresult}
  Let $\{\ket{\psi_m}\}_{m=1}^M \subset \Hilb$ be a set of orthonormal vectors in a Hilbert space $\Hilb$ of dimension $D$. 
  Let $A = \sum_{j=1}^{n_A} a_j \Pi_j$ be an operator on $\Hilb$ with $n_A \leq D$ distinct eigenvalues $a_j$ and corresponding eigen-projectors $\Pi_j$.
  Decompose $\Hilb = \bigoplus_{j=1}^{n_A} \Hilb_j$ into a direct sum such that each $\Hilb_j$ is the image of the corresponding $\Pi_j$ with dimension $D_j$.
  For each $j$ for which $D_j(D_j-1) \geq  M+1$ there exists an orthonormal basis $\{\ket{j,k}\}_{k=1}^{D_j} \subset \Hilb_j$ such that for all $k,m$
  \begin{equation} \label{eq:THEO}
    | \braket{\psi_m}{j,k} |^2 = \bra{\psi_m} \Pi_j \ket{\psi_m} / D_j \,.  
  \end{equation}
\end{theorem}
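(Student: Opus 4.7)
I would begin by \textbf{reformulating} the target equation. Setting $\ket{\phi_m}\coloneqq\Pi_j\ket{\psi_m}$, the identity \eqref{eq:THEO} is equivalent to requiring that, in the basis $\{\ket{j,k}\}$ of $\Hilb_j$, each rank-one operator $\ketbra{\phi_m}{\phi_m}$ have constant diagonal $\|\phi_m\|^2/D_j$. Geometrically, this amounts to finding a single unitary $U$ on $\Hilb_j$ that simultaneously \emph{flattens} all $M$ projected vectors $\phi_1,\ldots,\phi_M$, so that every component of $U^{\dagger}\phi_m$ has modulus $\|\phi_m\|/\sqrt{D_j}$. The orthonormality of $\{\psi_m\}$ supplies the structural constraint $\sum_m\ketbra{\phi_m}{\phi_m}\leq\Pi_j$, which restricts how rich the family $\{\phi_m\}$ can be inside $\Hilb_j$.

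My plan is to construct the basis iteratively, choosing $\ket{j,1},\ket{j,2},\ldots$ one at a time. Suppose $\ket{j,1},\ldots,\ket{j,k-1}$ already satisfy the required modulus condition, and let $V_k\coloneqq\Hilb_j\ominus\mathrm{span}\{\ket{j,1},\ldots,\ket{j,k-1}\}$, denoting by $\phi_m^{(k)}$ the orthogonal projection of $\phi_m$ onto $V_k$. A direct computation using the inductive hypothesis gives $\|\phi_m^{(k)}\|^2=(D_j-k+1)\|\phi_m\|^2/D_j$, so the problem restricted to $V_k$ is self-similar: the target value $\|\phi_m\|^2/D_j$ equals $\|\phi_m^{(k)}\|^2/\dim V_k$. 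Once $D_j-1$ vectors have been fixed, the last basis vector is determined up to a global phase and automatically satisfies its modulus condition via the completeness identity $\sum_k|\braket{j,k}{\phi_m}|^2=\|\phi_m\|^2$.

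At each step, the core existence question reduces to: find a unit $\ket{v}\in V_k$ with $\bra{v}A_m^{(k)}\ket{v}=0$ for every $m$, where $A_m^{(k)}\coloneqq\ketbra{\phi_m^{(k)}}{\phi_m^{(k)}}-(\|\phi_m^{(k)}\|^2/\dim V_k)\,I_{V_k}$ is traceless Hermitian. Tracelessness implies that the Haar average of $\bra{v}A_m^{(k)}\ket{v}$ over the unit sphere of $V_k$ vanishes, so the origin of $\mathbb{R}^M$ lies in the convex hull of the joint numerical range $\{(\bra{v}A_1^{(k)}\ket{v},\ldots,\bra{v}A_M^{(k)}\ket{v}):\|v\|=1\}$. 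I would then invoke a convexity/topological argument on the complex projective space $\mathbb{CP}^{\dim V_k-1}$ --- in the spirit of the Toeplitz--Hausdorff theorem and its multi-operator extensions --- to promote convex-hull membership to genuine membership in the joint numerical range.

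The main obstacle is keeping this existence argument viable down to the final nontrivial step. A naive step-wise parameter count would demand $M\lesssim\dim V_k$ at every stage, forcing an absurdly restrictive $M\leq 2$ at the last round. The hypothesis $D_j(D_j-1)\geq M+1$ instead bounds the total real dimension of the flag manifold $U(D_j)/T^{D_j}$ that parametrises bases modulo phases, which suggests that the right approach is a \emph{global} accounting of degrees of freedom: the excess flexibility in early choices must be reserved so that the final, most constrained step remains feasible. Turning this global bookkeeping into a rigorous argument --- rather than iterating a local bound --- is what I expect to be the principal technical difficulty.
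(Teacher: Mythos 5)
Your proposal contains a genuine gap, and it is the one you yourself flag at the end: the greedy, one-basis-vector-at-a-time construction is over-constrained in its late stages, and the ``global bookkeeping'' that you hope will rescue it is precisely the missing proof. Concretely, at step $k$ you must find a unit vector in $V_k$ (of dimension $D_j-k+1$) annihilating $M$ traceless Hermitian quadratic forms; since the joint numerical range of $M$ Hermitian operators on $\mathbb{C}^n$ is the image of a $(2n-2)$-real-dimensional projective space, such a common zero generically fails to exist once $M > 2\dim V_k - 2$, and at the penultimate step ($\dim V_k = 2$) this forces $M \leq 2$, while the theorem allows $M$ up to $D_j(D_j-1)-1$. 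Your proposed repair via Toeplitz--Hausdorff-type convexity does not close this: tracelessness only places the origin in the \emph{convex hull} of the joint numerical range, and for three or more operators (in low dimension) the joint numerical range need not be convex, so convex-hull membership cannot be promoted to membership. No amount of cleverness in the early choices can help, because the obstruction at the late steps is dimensional, not a matter of reserved flexibility.

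The paper's proof sidesteps the induction entirely by linearizing the problem. Passing to generalized Bloch vectors, a pure state $\ket{\varphi}\in\Hilb_j$ satisfies $|\braket{\psi_m}{\varphi}|^2 = \bra{\psi_m}\Pi_j\ket{\psi_m}/D_j$ if and only if its Bloch vector $\vec{b}\in\R^{D_j^2-1}$ is \emph{orthogonal} to the $M$ Bloch vectors of the normalized projections $\Pi_j\ket{\psi_m}$ --- a single linear condition shared by all $D_j$ sought basis vectors. The requirement that the $D_j$ vectors form an orthonormal basis translates into a rigid geometric configuration (they must sum to zero with pairwise inner products $-1/(D_j-1)$, i.e.\ be the facet vectors of a regular $(D_j-1)$-simplex), and such a configuration is then placed \emph{in one shot} inside the common orthogonal complement, which only needs dimension at least $D_j-1$; this is exactly the hypothesis $D_j(D_j-1)\geq M+1$. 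The key idea you are missing is that the ``flattening'' condition is linear in the rank-one projectors $\ket{j,k}\bra{j,k}$ rather than quadratic in the vectors, so all $D_j$ basis elements can be constrained by the same fixed subspace instead of by a nested, shrinking sequence of subspaces. (If you do pursue the Bloch-vector route, note one point the paper itself treats lightly: for $D_j>2$ not every unit vector in $\R^{D_j^2-1}$ is the Bloch vector of a pure state, so one must additionally argue that the simplex can be realized by genuine pure-state Bloch vectors.)
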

A detailed proof is provided in the Supplemental Material \footnote{See Supplemental Material \ref{sm:proof} at [URL will be inserted by publisher] for details of the proof of Theorem~\ref{thm:mainresult}.}.
If the condition $D_j(D_j-1) \geq M+1$ is fulfilled for all $j$, then the set of all $\{\ket{j,k}\}_{j,k}$ obviously is an orthonormal basis for all of $\Hilb$ and $A$ is diagonal in that basis.
So, as long as the degeneracies $D_j$ of $A$ are all high enough with respect to $M$,  $A$ has an eigenbasis whose overlaps with the states $\ket{\psi_m}$ are given exactly by the right hand side of \eqref{eq:THEO}. 

A particularly relevant case is when $A$ is a local observable acting non-trivially only on some small subsystem $S$ of dimension $D_S$ of a larger $N$-partite spin system of dimension $D = d^N$, i.e., $A \coloneqq \sum_{j=1}^{D_S} a_j \ketbra{a_j}{a_j} \otimes \1_{\overline{S}}$ and $\{\ket{\psi_m}\}_{m=1}^M$ is taken to be an eigenbasis $\{\ket{E_m}\}_{m=1}^D$ of the Hamiltonian $H$ of the full system.
We summarize some non-essential further details in the Supplemental Material \cite{Note2}.
In this case the degeneracies are all at least $D_j \geq D/D_S = d^{N-|S|}$, so that the above results guarantees that for all observables on up to $|S| < N/2$ sites there exists a tensor product basis $\{\ket{a_j, k}\}_{j,k}$ for $\Hilb$ which diagonalizes $A$ and with the property that
\begin{equation}
  | \braket{E_m}{a_j,k} |^2 = \frac{1}{d^{N-|S|}} \, \bra{a_j} \Tr_{\overline{S}} \ketbra{E_m}{E_m} \, \ket{a_j} \,.
\end{equation}
For subsystems with support on a small part of the whole system $|S| \ll N-|S|$, it is well known that the reduced states of highly entangled states are (almost) maximally mixed \cite{Typ1}, i.e. proportional to the identity. Moreover, based on the data available in the literature \cite{EE,CFT1,CFT2,CFT3,CFT4,EigEnt1,EigEnt2,EigEnt3,EigEnt3,EigEnt4}, there is agreement on the fact that, away from integrability, the energy eigenstates in the bulk of the spectrum have a large amount of entanglement. Thus, if the eigenstates $\ket{E_m}$ are all highly entangled $ \Tr_{\overline{S}} \ketbra{E_m}{E_m} \approx \1_{S} / d^{|S|}$ and we have
\begin{equation}
  | \braket{E_m}{a_j,k} |^2 \approx 1/d^N \,.
\end{equation}

This way of arguing shows how entanglement in the energy basis can lead the emergence of the ETH in a local observable.
While this result was expected for the diagonal part of ETH, we would like to stress that it is a non-trivial statement about the off-diagonal matrix elements. Since the magnitude of the off-diagonal matrix elements controls the magnitude of fluctuations around the equilibrium values, their suppression in increasing system size is of paramount importance for the emergence of thermal equilibrium.
If one assumes high-entanglement in the energy eigenstates, it is trivial to see that  $A_{mm} \approx \Tr{A}/D$.
Moreover, thanks to the HUO construction and Theorem~\ref{thm:mainresult} we can also make non-trivial statements (Eq.~\eqref{eqHUO} and Eq.~\eqref{eq:HUOandETH}) about the off-diagonal matrix elements.

The physical picture that emerges is the following: Entanglement in the energy eigenstates is the feature which makes a local observable satisfy the statement of the ETH. If the energy eigenstates are highly entangled in a certain energy window $I_0 = \left[E_{a} , E_{b} \right]$, as it is expected to happen in a non-integrable model, the ETH will be true for local observables, in the same energy window. 

We now turn our attention to the study of extensive observables and assume that we are interested in a certain energy window $\left[ E_a,E_b\right] $ which contains $M \leq D$ energy eigenstates.
The details of the computations can be found in the Supplemental Material \footnote{\label{footnote:sm:examples}See Supplemental Material \ref{sm:examples} at [URL will be inserted by publisher] for detailed calculations concerning local, extensive, and macroscopic observables.}.
The paradigmatic case that we study is the global magnetization $M_z \coloneqq \sum_{i=1}^{N} \sigma_i^z$.
Writing its spectral decomposition we have $M_z = \sum_{j=-N}^N j \Pi_j$, where the degeneracy $\Tr \Pi_j = D_j$ of each eigenvalue $j$ can be easily computed to be $D_j = {N \choose \frac{N-j}{2}}$.
Again, we call $\mathcal{H}_j \subset \mathcal{H}$ the image of the projector $\Pi_j$.
The inequality  $D_j(D_j -1) \geq M$ selects a subset $j\in [-j_{*}(M),j_{*}(M)]$ of spaces $\mathcal{H}_j$ for which the conditions of our theorem are satisfied.
Small $M$ will guarantee that the hypothesis of the theorem are satisfied in a larger set of subspaces $\mathcal{H}_j$.
If we are interested in the whole energy spectrum $M=D$, a rough estimation, supported by numerical calculations, shows that $j_{*}(D)$ scales linearly with system size: $j_{*}(D) \simeq 0.78 N$.
The physical intuition that we obtain is the following:
Subspaces with ``macroscopic magnetization'', i.e. around the edges of the spectrum of $M_z$, have very small degeneracy and the theorem does not yield anything meaningful for them.
However, in the bulk of the spectrum there is a large window $j\in [-j_{*}(D),j_{*}(D)]$ where the respective subspaces $\mathcal{H}_j$ meet the conditions for the applicability of the theorem.
Therefore $ \forall j \in \mathbb{Z} \cap [-j_{*}(D),j_{*}(D)] $ we have 
\begin{align}
 |\braket{E_m}{j,s}|^2 = \frac{\bra{E_m} \Pi_j \ket{E_m}}{D_j} \,.
\end{align}
If, for some physical reasons, one is not interested in the whole set energy spectrum but only in a small subset, the window $[-j_{*}(M),j_{*}(M)]$ will increase accordingly. Thanks to our theorem we can extract a physical criterion under which the global magnetization will satisfy ETH. Assuming that we can use Stirling's approximation, the  $M_z$ is a HUO iff 
\begin{equation}
  \bra{E_m} \Pi_j \ket{E_m} \approx 2^{-N H_2(p(j)|\!| p_{\mathrm{mix}})} \label{eq:ldt}\,,
\end{equation}
where $p(j) \coloneqq \left( \frac{1}{2} + \frac{j}{2N}, \frac{1}{2} - \frac{j}{2N}\right)$, $p_{\mathrm{mix}}\coloneqq p(0)$ and we used the binary relative entropy $H_2(p|\!|q) \coloneqq \sum_{k=1,2} p_k \log \frac{p_k}{q_k}$.
This relation has a natural interpretation in terms of large-deviation theory. Indeed, such a relation is a statement about the statistics induced by the energy eigenstates on the observable $M_z$. If such statistics satisfy large-deviation theory, as in Eq.~\eqref{eq:ldt}, the observable will satisfy ETH. A complete understanding of how this concretely happens goes beyond the purpose of the present work and it is left for future investigation.

We note that the hypothesis of the theorem do not hold for the whole spectrum of $M_z$. Moreover, the proven connection between HUOs and ETH relies on the applicability of the central limit theorem in the degeneracy space $\mathcal{H}_j$. Hence the picture that emerges is the following.
For extensive observables, ETH will hold if the statistics induced by the energy eigenstates satisfies a large deviation theory. If this is true, we do not expect it to hold through the whole spectrum but only in the subsectors with sufficiently high degeneracy.
Both statements fully agree with the intuition that, in the thermodynamic limit, macroscopically large values of an extensive sum of local observables should be highly unlikely. In a recent work by Biroli \emph{et al.} \cite{ETH9}, it was argued that in a chain of interacting harmonic oscillators, the measurement statistics of the average of the nearest-neighbor interactions, given by the diagonal ensemble, satisfies a large-deviation statistics. This allows for the presence of rare, non-thermal, eigenstates which can account for the absence of thermalization in some integrable systems. Our results goes along with such intuition. Indeed, if it is possible to show that a large-deviation bound emerges at the level of each energy eigenstate, for all of them, this would amount to a proof of ETH, as discussed before.

We now come to the last application of our theorem: the macro-observables originally proposed by von Neumann.
As for the two previous applications, more details can be found in the Supplemental Material \cite{Note2}.
As explained before, macro-observables induce a partition of the Hilbert space into subspaces in which such classical-like observables have all well defined eigenvalues. In this sense a \emph{macrostate} is an assignment of the eigenvalues of all these observables and the index $j$ runs over different macrostates.
By construction, each macrostate $j=1,\ldots, n$ corresponds to a subspace $\mathcal{H}_j$ of the whole Hilbert space which is highly degenerate and to which we can apply our theorem.
According to the result by von Neumann \cite{Neumann1929} and Goldstein et al. \cite{Typ3} it can be proven that the following relation holds for a given partition, \emph{for most Hamiltonians, in the sense of the Haar measure}: $\bra{E_m} P_j \ket{E_m} = \frac{D_j}{D}$.
The $P_j$'s are the projectors onto the subspaces $\mathcal{H}_j$.
Our theorem tells us that there exists a basis $\left\{ \ket{j,s}\right\}$ which diagonalises all the macro-observables such that $\bra{E_m} P_j \ket{E_m} = D_j |\braket{E_m}{j,s}|^2 $.
Using it in synergy with the previously mentioned result we find: 
\begin{align}
&|\braket{E_m}{j,s}|^2 = \frac{1}{D}\,.
\end{align}
This means that for most Hamiltonians, those macro-observables have a common basis that is a HUB.
Given the huge degeneracy of the spaces $\mathcal{H}_j$ this in turn allows us to formulate the following statement: \emph{for most Hamiltonians, in the sense of Haar, the macro-observables are degenerate HUOs and therefore satisfy ETH \ref{hyp:originaleth}}.

\paragraph{Conclusions.}
The ETH captures the wide spread and numerically very well corroborated intuition that the eigenstates of sufficiently complicated quantum many-body system have thermal properties.
Its importance stems from the fact that together with the results that constitute the framework of pure state quantum statistical mechanics, a proof of the ETH would yield a very general argument for the emergence of not just equilibration, but thermalization towards the prediction of equilibrium statistical mechanics from quantum mechanics alone.
Such a rigorous proof is, however, still missing, despite the progress in recent years that have significantly improved our understanding of the ETH by means of proofs of related statements and counterexamples.
Here we contribute to this program by bisecting the problem of proving ETH in two sub-problems related to the relative phases and the the overlaps between the eigenstates of the Hamiltonian and an observable.
We argue that the ETH can fail because of the former only through conspiratorial correlations in the phases.
Our main result concerns the second half of the problem.
Here we prove a rigorous result that shows when highly degenerate observables satisfy this part of the ETH and become Hamiltonian unbiased observables.
We illustrate our results with three types of physical observables, local, extensive, and macroscopic observables and collect and compare different versions of the ETH.
Our approach allows us in particular to make statements about the off-diagonal elements that are prominent in the original version of the ETH. 

\paragraph{Acknowledgements.}
We would like to thank Charles Xu for interesting discussions.
MH would like to acknowledge funding from the Austrian Science Fund (FWF) through the START project Y879-N27.
FA would like to thank the ``Angelo Della Riccia'' foundation and the St. Catherine's College of Oxford for their constant support to this research.
C.\ G.~acknowledges support by the European Union's Marie Sk\l{}odowska-Curie Individual Fellowships (IF-EF) programme under GA: 700140 as well as financial support from the European Research Council
(CoG QITBOX and AdG OSYRIS), the Axa Chair in Quantum Information Science,
Spanish MINECO (FOQUS FIS2013-46768, QIBEQI FIS2016-80773-P and Severo Ochoa Grant No.~SEV-2015-0522), EU STREP program EQuaM
(FP7/2007-2017, Grant No. 323714), Fundaci\'{o} Privada Cellex, and Generalitat de Catalunya (Grant No.~SGR 874 and 875, and CERCA Programme).
Furthermore we would like to express gratitude towards the COST Action MP1209 ``Thermodynamics in the quantum regime'', whose workshop in Smolenice sparked this collaboration.

\makeatletter
\newcommand{\manuallabel}[2]{\def\@currentlabel{#2}\label{#1}}
\makeatother

\clearpage
\appendix
\onecolumngrid

\manuallabel{sm:proof}{Section~A}
\section{\ref*{sm:proof}: Proof of the main theorem}

In this Appendix we provide the details of the proof of the main result of the paper, Theorem~\ref{thm:mainresult}. In the first subsection we give some background material, concerning the formalism of the generalized Bloch-vector parametrization. Such formalism will be used in the second subsection, where we give the actual proof of Theorem~\ref{thm:mainresult}.

\manuallabel{sm:generalizedbloch}{Subsection~A1}
\subsection{\ref*{sm:generalizedbloch}: Generalised Bloch-vector parametrization}
We start by briefly recalling the formalism of the generalized Bloch-vector parametrization\cite{geom,bloch} of a pure quantum state. The standard Bloch-vector parametrization is a well-known way to describe the space of pure-states of a qubit, by using the isomorphism between its two-dimensional projective Hilbert space and a 2-sphere $\mathbb{S}^2$. Such an isomorphism can be easily generalized to arbitrary dimensions and it is well known that the projective space of a $D-$dimensional complex Hilbert space is isomorphic to $\mathbb{S}^{D^2-2}$. This isomorphism can be made explicit by associating to any normalized rank-$1$ projector $\ketbra{\psi}{\psi}$ a generalized Bloch vector $\vec{b}(\psi) \in \mathbb{S}^{D^2-2} \subset \mathbb{R}^{D^2 - 1}$ that fulfills
\begin{align}\label{eq:blochvectorcorrespondence}
\ketbra{\psi}{\psi} = \frac{\mathbb{I}}{D} + \sqrt{\frac{D-1}{D}} \,\, \vec{b}(\psi) \cdot \vec{\gamma}
\end{align}
where $\vec{\gamma}$ is a vector with elements $\gamma_i \coloneqq \hat{\gamma}_i / \sqrt{2}$ and $\hat{\gamma}_i$ are the $D^2-1$ generators of $SU(D)$, with the following properties:
\begin{align}
\hat{\gamma}_i &= \hat{\gamma}_i^\dagger & \Tr(\hat{\gamma}_i) &= 0 & \Tr(\hat{\gamma}_i\,\hat{\gamma}_j) &= 2\, \delta_{ij}
\end{align}
Even though the term ``Bloch vector'' is normally used to identify the $2$-dimensional case, hereafter we will use it for its $D$-dimensional counterpart. The constant prefactor $\sqrt{\frac{D-1}{D}}$ has been put to make the norm of the Bloch vector independent on the dimension of the Hilbert space and always equal to one. The square of the absolute value of the scalar product between two pure states $\Ket{\psi},\Ket{\psi'} \in \Hilb$ is mapped into the scalar product of the two Bloch vectors $\vec{b},\vec{b}'$, plus a constant term
\begin{align} 
  |\braket{\psi}{\psi'}|^2 = \frac{1}{D} + \frac{D-1}{D} \,  \vec{b} \cdot \vec{b}' .
\end{align}

From this relation we can see that mutual unbiasedness is a very natural condition when written in term of the respective Bloch vectors. For any two sets of pure states $\{\Ket{\psi_j}\}_j$ and $\{\ket{\psi'_k}\}_k$, with respective Bloch vectors $\{\vec{b}_j\}_j$ and $\{\vec{b}'_k\}_k$ we have 
\begin{equation}
  |\braket{\psi_j}{\psi'_k}|^2 = \frac{1}{D} \quad \Longleftrightarrow \quad \vec{b}_j \cdot \vec{b}'_k = 0 \quad .
\end{equation}
In other words the sets $\{\Ket{\psi_j}\}_j$ and $\{\ket{\psi'_k}\}_k$ are mutually unbiased if and only if their respective sets of Bloch vectors are orthogonal. Now we look at how the property 
of being a basis of the Hilbert space is written in terms of the Bloch vectors of the basis elements. Let $\{\ket{\psi_j}\}_{j=1}^{D} \subset \Hilb$ be a basis of a Hilbert space of dimension 
$D$, with associated Bloch vectors $\{\vec{b}_j\}_j$. Using Eq.~\eqref{eq:blochvectorcorrespondence} we find that $\{\ket{\psi_j}\}_{j=1}^{D}$ spans all of $\Hilb$ if and only if
\begin{align}
  \1 = \sum_{j=1}^{D} \ketbra{\psi_j}{\psi_j} = \1 + \sqrt{\frac{D-1}{D}} \sum_{j=1}^{D} \vec{b}_j \cdot \vec\gamma .
\end{align}
Since the elements of $\vec\gamma$ are the linearly independent generators of $SU(D)$, this is equivalent to $\sum_{j=1}^{D} \vec{b}_j = 0$.
At the same time, the vectors $\{\ket{\psi_j}\}_{j=1}^{D}$ are orthonormal if and only if $\forall j,k \in \{1,\dots,D\}$
\begin{align}
  \delta_{jk} = |\Scal{\psi_j}{\psi_k}|^2 = \frac{1}{D} + \frac{D-1}{D} \, \vec{b}_j \cdot \vec{b}_k ,
\end{align}
which is equivalent to
\begin{align}
 \vec{b}_j \cdot \vec{b}_k = \frac{D}{D-1} \delta_{jk} - \frac{1}{D-1} .
\end{align}
In summary we obtain that $\{\ket{\psi_j}\}_{j=1}^{D}$ is a complete orthonormal basis if and only if their Bloch vectors $\{\vec{b}_j\}_j$ satisfy the two following conditions
\begin{subequations}
\begin{align} \label{eq:conditionsforbeingabasis}
  &&\sum_{k=1}^{D} \vec{b}_k &= 0\\
  &\text{and} & \vec{b}_h \, \cdot \, \vec{b}_k &= \frac{D}{D-1} \delta_{hk} - \frac{1}{D-1} =
                                    \begin{cases}
                                      1 & \text{if } $h=k$\\
                                      -\frac{1}{D-1} & \text{if } h \neq k 
                                    \end{cases}
\end{align}\label{eq:BASE} 
\end{subequations}

\manuallabel{sm:proofofmaintheorem}{Subsection~A2}
\subsection{\ref*{sm:proofofmaintheorem}: Proof of Theorem~\ref{thm:mainresult}}\label{sec:proofofmainresult}
In this second Appendix we present a detailed proof of Theorem~\ref{thm:mainresult} from the main text.
In order to do this we first introduce a well known theorem from geometry and the notions necessary to state it.
We then show how the generalized Block vector parametrization together with this theorem and properties of simplices allow to prove Theorem~\ref{thm:mainresult}.

In $\mathbb{R}^n$ an $n$-simplex is the generalization of the $2D$ triangle and the $3D$ tetrahedron to arbitrary dimensions.
A \emph{regular simplex} is a simplex which is also a regular polytope.
For example, the regular $2$-simplex is the equilateral triangle and the regular $3$-simplex is a tetrahedron in which all faces are equilateral triangles.
A $n$-simplex can be constructed by connecting a new vertex to all vertices of an $n-1$-simplex with the same distance as the common edge distance of the existing vertices.
This readily implies that the convex hull of any subset of $n$ out of the $n+1$ vertices of an $n$ simplex is itself a $n-1$-simplex, a so called \emph{facet} of the simplex.    
For $n=2$ they are the sides of the triangle, for $n=3$ they are the two dimensional triangles building the boundary surface of the tetrahedron.
To each facet we can associate a \emph{facet vector} defined as the vector orthogonal to the facet and with Euclidean length equal to the volume of the facet.
The result we need about these objects is the following theorem.

\begin{theorem}[Minkowski(-Weyl) Theorem \cite{poly}] \label{thm:minkowskiweyl}
  For any set of $n+1$ non co-planar vectors $\vec{V}_i \in \R^n$ that span $\R^n$ with the property 
  \begin{align}\label{eq:sumzero}
    \sum_{i=1}^{n+1} \vec{V}_i = 0 
  \end{align}
  there is a closed convex $n$-dim polyhedron whose facet vectors are the $\vec{V}_i$.  The converse is also true, for any closed convex polyhedron the facets vectors sum to zero.
\end{theorem}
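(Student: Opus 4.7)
The plan is to handle the two directions separately. The converse (``facet vectors of a polyhedron sum to zero'') is a direct application of the divergence theorem; the forward direction (existence of a polyhedron with prescribed facet vectors) is the substantial half, essentially Minkowski's classical existence theorem, though the hypothesis of exactly $n+1$ vectors pins the polyhedron down to an $n$-simplex, which simplifies matters.

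For the converse, let $P \subset \R^n$ be a closed convex polyhedron with facets $F_i$ having outward unit normals $\hat{n}_i$ and $(n-1)$-volumes $f_i$, so that $\vec{V}_i = f_i \hat{n}_i$. For any constant vector field $\vec{e} \in \R^n$, the divergence vanishes, hence
\begin{equation*}
0 \;=\; \int_{P} (\nabla \cdot \vec{e}) \, dV \;=\; \int_{\partial P} \vec{e} \cdot \hat{n} \, dA \;=\; \sum_i f_i (\hat{n}_i \cdot \vec{e}) \;=\; \vec{e} \cdot \sum_i \vec{V}_i.
\end{equation*}
Since $\vec{e}$ is arbitrary, $\sum_i \vec{V}_i = 0$.

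For the forward direction, set $\hat{n}_i \coloneqq \vec{V}_i/|\vec{V}_i|$ and $f_i \coloneqq |\vec{V}_i|$. Because the $\vec{V}_i$ span $\R^n$ and sum to zero, any $n$ of the $\hat{n}_i$ are linearly independent, so for each parameter vector $h = (h_1,\ldots,h_{n+1})$ in a suitable open cone the candidate
\begin{equation*}
P(h) \;\coloneqq\; \bigcap_{i=1}^{n+1} \bigl\{x \in \R^n : \hat{n}_i \cdot x \leq h_i\bigr\}
\end{equation*}
is a bounded non-degenerate $n$-simplex whose $i$-th facet lies on $\hat{n}_i \cdot x = h_i$. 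A standard calculation gives $\partial \mathrm{Vol}(P(h))/\partial h_i = A_i(h)$, where $A_i(h)$ is the $(n-1)$-volume of that facet. Thus the problem reduces to finding a critical point of the functional
\begin{equation*}
\Phi(h) \;\coloneqq\; \mathrm{Vol}(P(h)) - \sum_i f_i h_i ,
\end{equation*}
since the stationarity condition $\partial_{h_i} \Phi = 0$ reads $A_i(h) = f_i$, which is exactly what is needed so that the facet vectors of $P(h)$ are the $\vec{V}_i$.

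The main obstacle is demonstrating that $\Phi$ attains a critical point. The hypothesis $\sum_i \vec{V}_i = 0$ is indispensable here: under the $n$-parameter translation $h_i \mapsto h_i + \hat{n}_i \cdot \vec{t}$ the polyhedron $P(h)$ translates by $\vec{t}$, preserving $\mathrm{Vol}(P(h))$, while the linear term shifts by $\vec{t} \cdot \sum_i \vec{V}_i = 0$, so $\Phi$ descends to the quotient by this symmetry. To show existence of a critical point on the quotient I would invoke the Brunn--Minkowski inequality, which says that $h \mapsto \mathrm{Vol}(P(h))^{1/n}$ is concave (since $P(h+h') \supseteq P(h) + P(h')$ as a Minkowski sum), combined with the degree-$n$ positive homogeneity $\mathrm{Vol}(P(\lambda h)) = \lambda^n \mathrm{Vol}(P(h))$ versus the degree-$1$ growth of $\sum_i f_i h_i$; these together force $\Phi$ to achieve an interior extremum, giving the required simplex. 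Uniqueness (up to translation) then follows from Minkowski's uniqueness theorem, although only existence is needed for the statement as formulated. For the particularly clean simplex case one could alternatively shortcut the variational argument by constructing the $n+1$ vertices directly: choose $v_0$ freely, then inductively place $v_i$ so that the facet opposite $v_j$ has outward normal $\hat{n}_j$ and area $f_j$, using that the $\vec{V}_i$ summing to zero guarantees the last placement is consistent.
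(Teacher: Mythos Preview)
The paper does not actually prove this theorem. Theorem~\ref{thm:minkowskiweyl} is stated as a known geometric result and referenced to \cite{poly}; the supplemental material only draws from it the elementary corollary that the facet vectors of a regular $n$-simplex satisfy $\cos\alpha=-1/n$, which is then used inside the proof of Theorem~\ref{thm:mainresult}. So there is no ``paper's proof'' to compare against, and your write-up goes well beyond what the authors supply.

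On its own merits your outline is sound. The divergence-theorem argument for the converse is the standard one and is complete as written. For the forward direction you have identified the classical variational route to Minkowski's existence theorem: parametrize candidate polytopes by support numbers $h_i$, use the identity $\partial\,\mathrm{Vol}(P(h))/\partial h_i = A_i(h)$, and exploit Brunn--Minkowski concavity together with translation invariance (guaranteed by $\sum_i \vec V_i=0$) to locate a critical point. One point deserves tightening: as stated, $\Phi(h)=\mathrm{Vol}(P(h))-\sum_i f_i h_i$ is unbounded above (volume is degree-$n$ homogeneous while the linear term is degree $1$), so it does not simply ``achieve an interior extremum''. The usual fix is to maximize $\mathrm{Vol}(P(h))$ on the slice $\sum_i f_i h_i = 1$ modulo translations, argue the maximum is attained in the relative interior, and then read off $A_i=\lambda f_i$ via Lagrange multipliers before rescaling. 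Your alternative simplex-specific shortcut (placing vertices inductively) is also viable in the $n{+}1$-vector case and would in fact be the most economical proof here.
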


If we apply the theorem to an $n$-simplex, whose facets vector are all of equal magnitude it can be easily seen that the (all equal) dihedral angles $\alpha$ between two facet vectors
are such that $\cos \alpha = - \frac{1}{n}$. This fact will be used in the proof of Theorem~\ref{thm:mainresult}. Projecting Eq.~\eqref{eq:sumzero} onto the direction of one vector $\vec{V}_k$ and using the fact that all dihedral angles have the same magnitude $\alpha$ in a simplex we have $\sum_{i=1}^{n+1} \vec{V}_k \cdot \vec{V}_i = 1 + n \cos \alpha = 0$. Which gives $\cos \alpha = - \frac{1}{n}$. We can now proceed with the proof of Theorem~\ref{thm:mainresult}.

\begin{proof}[Proof of Theorem~\ref{thm:mainresult}]
  If $\mathcal{H}$ is a $D$-dimensional Hilbert space, take an arbitrary decomposition $\mathcal{H}=\oplus_{j=1}^n \mathcal{H}_j$ and call $P_j$ the projectors onto $\mathcal{H}_j$. Define $p_{m,j} \coloneqq \bra{\psi_m}\,P_j\,\ket{\psi_m}$.
  For every $\ket{\psi_m}$ let
  \begin{equation}
    \ket{\psi_m^{(j)}} \coloneqq
    \begin{cases}
      P_j\,\ket{\psi_m}/\sqrt{p_{m,j}} & \text{if } p_{m,j} \neq 0\\
      0 & \text{otherwise}
    \end{cases}
  \end{equation}
  be the normalized projection onto the subspace associated with $P_j$ or the zero vector if $\ket{\psi_m}$ is orthogonal to that subspace.
  Now, for any vector $\ket{\varphi} \in \Hilb_j$ we can write $| \braket{\psi_m}{\varphi} |^2 = | \bra{\psi_m}\,P_j\,\ket{\varphi} |^2 = p_{j,k}\, | \braket{\psi_m^{(j)}}{\varphi} |^2$.
  As both $\ket{\psi_m^{(j)}}$ and $\ket{\varphi}$ are contained in $\Hilb_j$, via the construction described in \ref{sm:generalizedbloch}, they have associated generalized Bloch vectors $\vec{b}_m^{(j)}$ and $\vec{b}$ in $\mathbb{S}^{D_j^2-2}$.
  Using Eq.~\eqref{eq:blochvectorcorrespondence} we thus have
  \begin{equation}
    | \braket{\psi_m}{\varphi} |^2 = p_{m,j} \, \frac{1}{D_j} + p_{m,j} \, \frac{D_j-1}{D_j} \,  \vec{b} \cdot \vec{b}_m^{(j)} .
  \end{equation}
  We conclude that $\ket{\varphi} \in \Hilb_j$ has the desired property (Eq.~\eqref{eq:THEO}) of the basis vectors $\ket{j,k}$ if and only if $\vec{b}$ is orthogonal to all the $\vec{b}_m^{(j)}$.
  For any given $j$, in the worst case, all the $M$ vectors $\vec{b}_m^{(j)}$ are linearly independent, leaving a subspace of dimension $D_j^2-2-M$ for picking $\vec{b}$.
  Now, we don't want to pick just one vector $\vec{b}$ from this subspace, but $D_j$ many such vectors, which moreover satisfy the conditions in \eqref{eq:conditionsforbeingabasis} so that their associated state vectors form an orthonormal basis for $\Hilb_j$.
  The Minkowski(-Weyl) Theorem (Theorem~\ref{thm:minkowskiweyl}) tells us that this can be achieved by taking them to be the facet vectors $\vec{V}_i$ of a regular simplex in this subspace, as long as the subspace has sufficiently high dimension.
  More precisely, the first condition from \eqref{eq:conditionsforbeingabasis} is always satisfied for facet vectors $\vec{V}_i$ of general polytopes and the second condition can be achieved by using the facet vectors of a regular simplex, scaled so that they have Euclidean norm equal to one.
  This follows because the cosine of the angle between any two facet vectors of an $n$-simplex is $-1/n$.
  So, as long as the space of vectors orthogonal to all the $\vec{b}_m^{(j)}$ is large enough to accommodate for a $D_j-1$-simplex, $D_j$ suitable Bloch vectors of an orthonormal basis $\{\ket{j,k}\}_{k=1}^{D_j} \subset \Hilb_j$ that is unbiased with respect to all $\ket{\psi_m}$ can be found.
  This is the case as long as $D_j^2-2-M \geq D_j-1$.
\end{proof}

\manuallabel{sm:examples}{Section~B}
\section{\ref*{sm:examples}: Examples}
In this second Appendix, we give more details about how to apply Theorem~\ref{thm:mainresult} to the three examples given in the manuscript and how to derive the results. We use a one-dimensional spin-1/2 chain as an exemplary case to showcase our result. Moreover, we will always be interested in using the Hamiltonian eigenvectors as a set of vectors for our theorem. This means $M=D$ and $\left\{ \ket{\psi_j} \right\}_{j=1}^D = \left\{ \ket{E_m}\right\}_{m=1}^D$. However, if for some reason one is interested in a limited portion of the energy spectrum, the results can be strengthened by limiting the set of eigenvectors to $M < D$.

\subsection{Example 1: Local observables}

As first application of our Theorem, we study the emergence of ETH in a local observable which has support on less than half of the whole chain. The total number of spins is $N$ and the Hilbert space is split into tensor products of $k$ and $N-k$ spins: $\mathcal{H} = \mathcal{H}_k \otimes \mathcal{H}_{N-k}$. Local observables $A_{\mathrm{loc}}= A_k \otimes \mathbb{I}_{N/k} = \sum_{j=1}^{2^k} P_j a_j$ have support on $k \leq N-k$ sites. In this case all eigenvalues have degenerate subspaces with the same dimension: $\mathrm{dim} \mathcal{H}_j = \Tr P_j = D_j = 2^{N-k}$. The condition that ensures the validity of the hypothesis of Theorem~\ref{thm:mainresult} is $2^{N-k}(2^{N-k}-1) \geq 2^N + 1$. Applying the $\log $  to both sides and with some algebraic manipulations we obtain
\begin{equation}
2(N-k)\log 2 - N \log 2 \geq \log \left( \frac{1-\frac{1}{2^{N-k}}}{1+\frac{1}{2^N}}\right)
\end{equation}
The right-hand side is always negative. So we request the following (slightly stronger) condition
\begin{equation}
2(N-k)\log 2 - N \log 2 \geq 0 \geq \log \left( \frac{1-\frac{1}{2^{N-k}}}{1+\frac{1}{2^N}}\right)
\end{equation}
The condition arising from the first inequality gives $k \leq \frac{N}{2}$. Therefore, local observables with support on less than half of the chain satisfy the assumptions of our theorem. For them we obtain that there is a basis $\ket{a_j,k}$ that diagonalizes the observable, such that

  \begin{equation} 
    | \braket{E_m}{a_j,k} |^2 = \bra{E_m} P_j \ket{E_m} / 2^{N-k}
  \end{equation}
since $P_j = A_j \otimes \mathbb{I}_{N/k}$ we have $\bra{E_m} P_j \ket{E_m} = \Tr_k \left(A_j \rho_k(E_m)\right)$ where $\rho_k(E_m) = \Tr_{N/k}\ket{E_m}\bra{E_m}$. For small subsystems $k \ll N-k$, if the Hamiltonian eigenstates are highly entangled, which is expected to be true for a non-integrable system in the bulk of the spectrum, the von Neumann entropy of the reduced state is close to the maximum value $k \log 2 - S_{\mathrm{vN}}(\rho_k(E_m)) \leq \epsilon_{k}(E_m)$ with $\epsilon_k(E_m) \geq 0$. Using Pinsker's inequality and the fact that the relative entropy with respect to the maximally mixed state is just the difference between the two entropies we have 
\begin{equation}
|| \rho_k(E_m)- \frac{\mathbb{I}}{2^k}||^2 \leq \frac{1}{2} (k \log 2 - S_{\mathrm{vN}}(\rho_k(E_m))) \leq \frac{\epsilon_k(E_m)}{2} \, .
\end{equation}
Whenever $\epsilon_k(E_m) \ll 1$, which is expected to be true in the bulk of the spectrum, we have
\begin{equation}
| \braket{E_m}{a_j,k} |^2 = \frac{\bra{E_m} P_j \ket{E_m} }{2^{N-k}} = \frac{\Tr_k A_j \rho_k(E_m)}{2^{N-k}} = \frac{\bra{a_j} \rho_k(E_m)\ket{a_j}}{2^{N-k}} \simeq \frac{1}{2^N} \, .
\end{equation} 
We can therefore conclude that entanglement in the energy eigenstate is the feature that makes local observables be HUOs. Provided certain mild assumptions, which have been discussed in the paper, are satisfied, this guarantees that they satisfy ETH. We conclude that, if the energy eigenstates are highly entangled in a certain energy window $I_0 = \left[E_{a} , E_{b} \right]$, as it is expected to happen in a non-integrable model, ETH will hold for all local observables, in the same energy window.

\subsection*{Example 2: Extensive observable - Global magnetization }
In this second example we study the consequences of our theorem for an observable which is the extensive sum of local observables: the global magnetization $M_z=\sum_{n=1}^N \sigma_n^z$. Its spectral decomposition is $M_z=\sum_{j=-N}^{N} j \, P_j$ so the Hilbert space is decomposed as the direct sum of the $\mathcal{H}_j$, which are the images of the $P_j$: $\mathcal{H}=\bigoplus_{j=-N}^{N} \mathcal{H}_j$. Their dimension $D_j = \Tr \, P_j$ can be computed using combinatorial arguments: $D_j = C^N_{j}\equiv {N \choose \frac{N-j}{2}}$.  At fixed size $N$, $D_j \in [1,{N \choose N/2}]$. The inequality $D_j \geq 1+\frac{2^N+1}{D_j}$ selects a subset $j \in [-j_{*} ,j_{*} ]$ of subspaces $\mathcal{H}_j$ for which the theorem will hold. Note that the interval $[-j_{*},j_{*}]$ is symmetric with respect to zero because $D_{j}=D_{-j}$. In order to find how $j_{*}$ scales with the system size, we numerically compute how many subspaces $\mathcal{H}_j$ meet the condition $D_j \geq 1+\frac{2^N+1}{D_j}$. We call this number $q(N)$. Since the eigenvalues are given by the relative number $j \in \mathbb{Z} \cap [-N,N]$ and they are equally spaced, we have $q(N)=2 j_{*}+1$. Which means $j_{*}=\frac{q(N)-1}{2}$. In Fig.\ref{fig:Macro} we can see that it scales linearly with the system size: $q(N) \sim 1.56 N$. This gives $j_{*}(N) \sim 0.78 N$.

\begin{figure}[h]
\includegraphics[width=3.6 in]{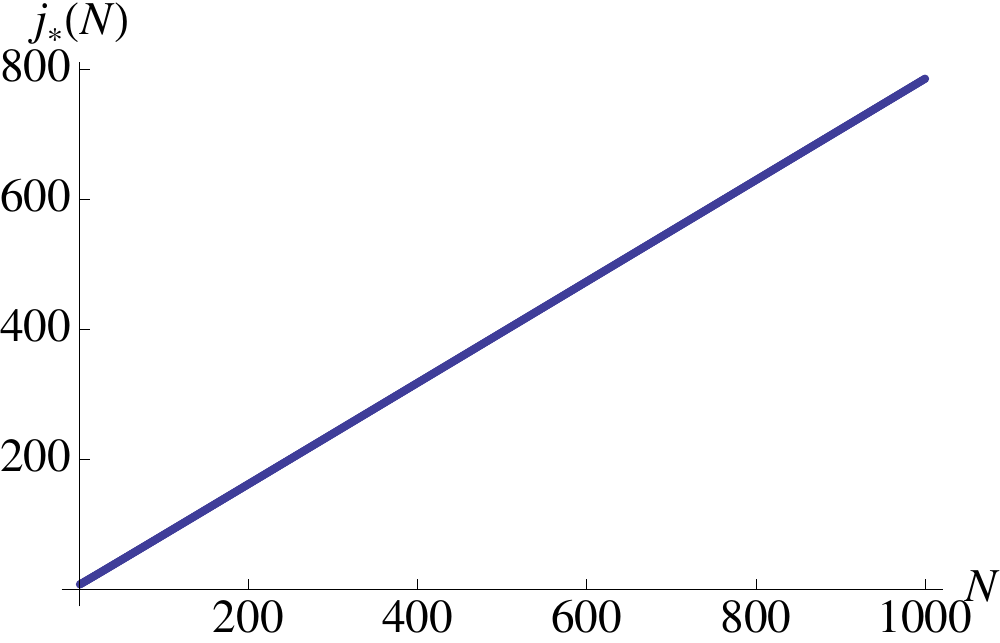}
\caption{Scaling of the number of subspaces $\mathcal{H}_j$ which meet the condition $D_j \geq 1+ \frac{2^N+1}{D_j}$.}\label{fig:Macro}
\end{figure}
The picture that we obtain is the following. States with ``macroscopic magnetization'', i.e. around the edges of the spectrum of $M_z$, have very small degeneracy and the theorem is not going to hold for them. In the bulk of the spectrum, however, there is a large window $j\in [-j_{*}(N),j_{*}(N)]$ where the respective subspaces $\mathcal{H}_j$ meet the conditions for the validity of the theorem. In summary, if we apply the theorem to the global magnetization we obtain:

\begin{equation}
\forall j \in \mathbb{Z} \cap [-j_{*}(N),j_{*}(N)]\,, \qquad  |\braket{j,s}{E_m}|^2 = \frac{\bra{E_m} P_j \ket{E_m}}{D_j} \,.
\end{equation}
We know that the relation we are interested in is the Hamiltonian Unbiasedness, which would be $\frac{\bra{E_m} P_j \ket{E_m}}{D_j} \simeq \frac{1}{2^N}$. For this reason we study the relation
\begin{equation}
\frac{\bra{E_m} P_j \ket{E_m}}{D_j} \simeq \frac{1}{2^N} \qquad \Rightarrow \qquad  \bra{E_m} P_j \ket{E_m} \simeq \frac{D_j}{2^N}\,,
\end{equation}
which in turn means to study how $\frac{D_j}{2^N}$ behaves. For this goal, in the large $N$ regime we can use Stirling's approximation. As it is known, there is not a unique way of using it. Rather, there are different ways, depending on the number of sub-leading terms that one is willing to use. Here we focus on the leading term. Note that Stirling's approximation can be used throughout the whole window $[-j_{*}(N),j_{*}(N)]$, as long as $N \gg 10$. This is true because $j_{*}(N) \sim 0.78 N$ so $|j| \in [0,0.78 N]$ and $\frac{N-j}{2}\sim 0.1*N$. Therefore, as long as $0.1 N \gg 1$, we can use Stirling's formula for all the factorials involved in $D_j$. It can be shown that if $n\geq k \gg 1$, at the leading order we have ${n \choose k} \sim 2^{n H_2(\frac{k}{n})}$ where $H_2(x)\equiv -x\log_2 x - (1-x) \log_2 (1-x)$ is the binary entropy. Using this we get 
\begin{equation}
D_j \approx 2^{N\,H_2\left(\frac{N-j}{2N}\right)} = 2^{N\,H_2\left(\frac{1}{2} - \frac{j}{2N}\right)} \,\,\, ,
\end{equation}
which in turn gives 
\begin{equation}
\frac{D_j}{2^N} \approx 2^{-N [1-H_2\left(\frac{1}{2} - \frac{j}{2N}\right)]} \,\,\, .
\end{equation}
We have the size of the system $N$ which multiplies a function which is a binary
relative entropy. If we call $p_{\mathrm{mix}}\coloneqq \left( \frac{1}{2}, \frac{1}{2} \right)$ and $p(j)=\left(\frac{1}{2} - \frac{j}{2N} ,\frac{1}{2} + \frac{j}{2N} \right)$ we have 

\begin{equation}\label{eq:ldb}
1-H_2[\left(\frac{1}{2} - \frac{j}{2N}\right)] = H_2\left[ p(j)||p_{\mathrm{mix}} \right] \quad \Rightarrow \quad \frac{D_j}{2^N} \approx 2^{-N H_2\left[ p(j)||p_{\mathrm{mix}} \right]}\,.
\end{equation}
Eq.~\eqref{eq:ldb} has a very interesting form.
It is telling us that the statistics of the eigenvalues $a_j$, induced by the eigenstates $\ket{E_m}$, satisfies a large deviation bound. The rate function is given by the binary Kullback-Leibler divergence $H_2[p(j)|\!| p_{\mathrm{mix}}]$. Now we can formulate a clear statement. Choose a subspace $\mathcal{H}_j$ with $|j| < j_{*}$ where the hypothesis of our theorem hold. \emph{If there is a $k \in \mathbb{N}, \,\,\, k < j_{*}$ such that for all $j \in [-k,k]$ we have $\bra{E_m}P_j \ket{E_m} \approx 2^{-N H_2\left[ p(j)||p_{\mathrm{mix}} \right]}$, the global magnetization $M_z$ will be an HUO and satisfy the ETH in the subspaces $\bigoplus_{|j|<k} \mathcal{H}_j$.} Concretely, this will happen if the measurement statistics generated by the energy eigenstates $\ket{E_m}$ on the eigenvalues $a_j$ satisfies a large deviation bound.

To build our intuition on what this means we evaluate $H_2\left(\frac{N-j}{2N}\right)$ in two regimes allowed by our Theorem: $\frac{|j|}{N} \ll 1$ and $\frac{|j| - j_{*}}{N} \ll 1$. In the first case, calling $x=\frac{|j|}{N}$ we can Taylor-expand $H_2(\frac{1-x}{2} )$ around $x \ll 1$ to obtain 
\begin{equation}
H_2\left(\frac{1-x}{2}\right) \stackrel{x \ll 1}{\approx} 1 - \frac{x^2}{2} \qquad \Rightarrow \qquad H_2\left(\frac{1-|j|/N}{2}\right) \approx 1 - \frac{j^2}{2N^2} \qquad |j|/N \ll 1\,. \label{eq:expansion}
\end{equation}

In the regime $|j|\approx j_{*}$ we have a better way to estimate $D_j$. Indeed in such regime $D_j \approx D_{j_{*}}$, which satisfies $D_{j_{*}} \approx 1+ \frac{2^N-1}{D_{j_{*}}}$. Solving for $D_{j_{*}}$ and taking the leading order in $N$ we obtain $D_{j_{*}} \approx 2^{N/2}$. Moreover, using the expression in Eq.~\eqref{eq:expansion} we can find how $D_j$ deviates from $D_{j_{*}}$. Indeed expanding $H_2(\frac{1-|j|/N}{2})$ around $j_{*}$ we get
\begin{equation}
H_2\left(\frac{1-|j|/N}{2}\right) \approx H_2\left(\frac{1-j_{*}/N}{2}\right) - \left. \frac{dH_2}{dx} \right\vert_{x=\frac{1-j_{*}/N}{2}} \frac{|j|-j_{*}}{2N} \approx \frac{1}{2} - \frac{3}{2} \left(\frac{|j|-j_{*} }{N}\right)\,.
\end{equation}
In summary, when $N \gg 10$
\begin{align}
& \frac{D_j}{2^N} \approx  \left\{ \begin{array}{ll} 
 2^{- \frac{j^2}{2N}} & \quad \frac{|j|}{N} \ll 1  \\
& \\
 2^{-\frac{N}{2} - \frac{3}{2}(|j|-j_{*})} & \quad \frac{|j| - j_{*}}{N} \ll 1
  \end{array} \right. 
\end{align}
This means that when we approach the thermodynamic limit $N \to \infty$, the eigenvalues with higher magnetization will be exponentially suppressed in the system size. This is indeed what we expect to be true at the macroscopic level.

\subsection*{Example 3: Macroscopic equilibrium - Normal typicality and von Neumann's Quantum H-theorem}

In this last example we investigate the connection of our theorem with the notion of Macro-observables proposed by von Neumann in his work on the Quantum H-theorem \cite{Neumann1929,Tumulka2010}.
This in turn is strictly related with the notion of Normal typicality developed in a series of more recent works by Goldstein \emph{et al.} \cite{Typ3,Typ4,Typ5}.
Again, we start by decomposing our Hilbert space $\mathcal{H}$ as a direct sum of subspaces $\mathcal{H}_j$.
The index $j$ runs over a finite number of values that identify different macroscopic properties of the system.
One could say that it identifies different ``macrostates'', characterized by the expectation value of commuting macroscopic observables.
In the original idea by von Neumann, in a classical system we measure position and momentum, which commute.
His point was that there are some coarse-grained approximation of actual position and momenta which can be ``rounded'' to obtain a set of commuting macro-observables.
Such set of commuting Macro-observables provides a decomposition of the Hilbert space $\mathcal{H} = \bigoplus_{j=1}^n \mathcal{H}_j$ where the index $j$ runs over all the possible different macrostates.
Each one of these spaces $\mathcal{H}_j$ is hugely degenerate and we assume here that we can use our Theorem for all of them.
Using the concentration of measure phenomenon it can be shown \cite{Typ3,Typ4,Typ5} that for most $t$, $\bra{\psi(t)}P_j \ket{\psi(t)} \simeq \frac{D_j}{D}$ for all $j$, for most Hamiltonians in the sense of Haar and for all $\psi(0)$.

Concretely, this happens for all $\psi(0)$ if and only if $\bra{E_m}P_j \ket{E_m} \simeq \frac{D_j}{D}$ for all $j$ and $m$.
Such a relation can be proven to hold in the same sense as before.
For most Hamiltonians in the sense of Haar
\begin{equation}
  \bra{E_m}P_j \ket{E_m} \simeq \frac{D_j}{D} \qquad \forall  j,m\,.
\end{equation}
The unitary for which this ``most'' holds is the one connecting the Hamiltonian eigenbasis to the basis giving the decomposition of the Hilbert space into ``commuting macro-observables''. We can now see that the connection of these ideas with ETH is unraveled by our theorem 1 and by the notion of HUO. Indeed, using our theorem, we can write 
\begin{equation}
\bra{E_m}P_j \ket{E_m} = D_j \left| \braket{j,s}{E_m}\right|^2
\end{equation}
Therefore
\begin{equation}
  \bra{E_m}P_j \ket{E_m} \simeq \frac{D_j}{D} \qquad \Longleftrightarrow \qquad \left| \braket{j,s}{E_m}\right|^2 \simeq \frac{1}{D}
\end{equation}
From this we conclude that \emph{for most Hamiltonians, in the sense of Haar, that the basis $\left\{\ket{j,s}\right\}$ which diagonalizes all the ``commuting macro-observables'' giving the decomposition $\mathcal{H}=\bigoplus_j \mathcal{H}_j$ is an Hamiltonian Unbiased Basis (HUB)}.
Moreover, thanks to the fact that each subspace $\mathcal{H}_j$ is highly degenerate and that the decomposition $\mathcal{H}=\oplus_j \mathcal{H}_j$ is generated by Macro-observables, this proves that all Macro-observables built in this way are HUO. Again, provided certain mild assumptions, which have been discussed in the main text, are satisfied, this guarantees that they satisfy ETH.

\end{document}